\newtheorem{theorem}{Theorem}
\newtheorem{lemma}{Lemma}
\newtheorem{corollary}{Corollary}
\newcommand{\bra}[1]{\mbox{$\left\langle #1 \right|$}}
\newcommand{\ket}[1]{\mbox{$\left| #1 \right\rangle$}}
\newcommand{\braket}[2]{\mbox{$\left\langle #1 | #2 \right\rangle$}}
\def\tr{{\rm Tr}}
\def\IC{{\mathbb C}}
\newcommand{\maxnorm}[1]{\left\lVert #1 \right\rVert_{\text{max}}}
\DeclareMathOperator{\myRe}{Re}
\def\dmathX#1#2{
$$\lineskiplimit=1000pt \advance\lineskip by #1\jot 
\mathsurround=0pt \tabskip=0pt plus 1000pt
\everycr{\noalign{\penalty\interdisplaylinepenalty}}
\halign to \displaywidth{
\hfil$\displaystyle{##}$\tabskip=0pt&%
\hfil $\displaystyle{{}##{}}$\hfil &%
\hfil $\displaystyle{{}##{}}$\hfil &%
$\displaystyle{##}$\hfil \tabskip=0pt plus 1000pt minus 1000pt&%
\refstepcounter{equation}\label{##}\llap{(\theequation)}\tabskip=0pt\cr
\noalign{\ifdim \prevdepth>-1000pt \vskip -#1\jot\fi}
#2\crcr}$$}
\begin{document}

\title
{Physical time-energy cost of a quantum process determines its information fidelity}

\author{Chi-Hang Fred Fung}
\email{chffung@hku.hk}
\affiliation{Department of Physics and Center of Theoretical and Computational Physics, University of Hong Kong, Pokfulam Road, Hong Kong}
\author{H.~F. Chau}
\affiliation{Department of Physics and Center of Theoretical and Computational Physics, University of Hong Kong, Pokfulam Road, Hong Kong}

\begin{abstract}
A quantum system can be described and characterized by at least two different concepts, namely, its physical and
informational properties.
Here, we explicitly connect these two concepts, by equating
the time-energy cost which is the product of the largest energy of a Hamiltonian of quantum dynamics and the evolution time, and the entanglement fidelity which is the informational difference between an input state and the corresponding output state produced by a quantum channel characterized by the Hamiltonian.
Specifically, the worst-case entanglement fidelity between the input and output states is exactly the cosine of the channel's time-energy cost (except when the fidelity is zero).
The exactness of our relation makes a strong statement about the intimate connection between information and physics.
Our exact result may also be regarded as a time-energy uncertainty relation 
for the fastest state that achieves a certain fidelity.

\end{abstract}

\pacs{03.67.-a, 03.67.Lx, 89.70.Eg}

\maketitle

\section{Introduction}

All information processing tasks are carried out by physical systems~\cite{Landauer1991}.
Given a quantum system with certain eigen-energies evolved for a fixed amount of time, what can we say about its information processing characteristics?
Inevitably, the information processing power of a physical system is limited by its physical resources including in particular the evolution time and energies~\cite{Lloyd2000}.
This theme on the relation between the physics and information of computing devices has been intensively investigated since the proposal of the Laudauer's principle in 1961~\cite{Landauer1961}.
Intuitively, the more energy and time used, the more informational work such as flipping and erasing a logical state can be done.
In quantum mechanics, a closed system may be characterized by a (time-independent) Hamiltonian $H$ whose eigenvalues are the energies of the system.
When the system evolves for a time period $t$, its initial state $\ket{\psi_\text{i}}$ will transform unitarily, according to the Schr\"{o}dinger equation, to the final state
$\ket{\psi_\text{f}} = U \ket{\psi_\text{i}}$
where
$U=\exp(-i Ht/\hbar)$.
Consider, for example, the rotation of a qubit about the $X$-axis by 
a unitary transformation $U=\exp(-i \omega \sigma_X)$
where $\omega$ is some fixed parameter and $\sigma_X$ is the Pauli $X$ matrix.
If $\omega=\pi/2$, it is a bit flip operation changing $\ket{0}$ to $-i\ket{1}$; other values of $\omega$ may be regarded as a partial bit flip.
We can implement this operation using a Hamiltonian $H=\hbar \omega \sigma_X /t$ with energies $\pm \hbar \omega/t$ evolved for $t$ amount of time.
Thus, in this case, the product of the time and energy, $\pm \hbar \omega$, gives the amount of informational work (complete or partial bit flip depending on $\omega$) done by the system. 
This motivates the consideration of the time-energy product as a measure of the physical resource in this paper and in previous studies~\cite{Chau2011,Fung:2013:Time-energy,Uzdin:2013:time-energy,Fung:2014:TE:Measurements}.
Also, this product form appears in 
time-energy uncertainty relations (TEURs)~\cite{Mandelstam1945,Bhattacharyya1983,Anandan1990,Uhlmann1992,Vaidman1992,Pfeifer1993,Margolus1996,Margolus1998,Chau2010} as explained below.
On the other hand, the informational difference between two quantum states is often measured by the trace distance and the fidelity.

We investigate in this paper the connection between physics and information in 
the setting of quantum dynamics.
Research in the same theme under the setting of thermodynamics has also been investigated~\cite{Landauer1991,Oppenheim2002,delRio:2011:negativeentropy}.
These studies often deal with heat dissipation and entropy changes.
Research in both settings has been actively carried out.

Previous works based on quantum dynamics have resulted in 
TEURs
in the study of quantum speed limit~\cite{Mandelstam1945,Bhattacharyya1983,Anandan1990,Uhlmann1992,Vaidman1992,Pfeifer1993,Margolus1996,Margolus1998,Chau2010}.
Many TEURs 
often take the form of the product of (functions of) the eigen-energies of the quantum system and the evolution time being greater than or equal to 
the fidelity.
These TEURs involving physical and informational properties suggest a connection between them, but the connection is weak since the TEURs are inequalities.
In this paper, we discover an equality relation between the time-energy product and fidelity and thus this establishes a strong link between physics and information.

To be more specific about how the eigen-energies of a Hamiltonian are related to the fidelity, let us consider
a TEUR which is dependent on the energy spread%
~\cite{Bhattacharyya1983,Uhlmann1992,Pfeifer1993}:
given a system characterized by a time-independent Hamiltonian 
$H$,
the time $t$ needed to evolve an initial state $\rho$ to a final state $\rho'$ is
\begin{align}
\label{eqn-time-energy-delta-E}
t \Delta E \ge \hbar \cos^{-1}(F(\rho,\rho'))
\end{align}
where 
$F(\rho,\rho')\equiv\tr \sqrt{\rho^{1/2} \rho' \rho^{1/2}}$
is the fidelity between two mixed quantum states $\rho$ and $\rho'$~\cite{Jozsa1994,*Uhlmann1976},
and $\Delta E=\sqrt{\tr (H^2 \rho)-[\tr(H \rho)]^2}$ is the standard deviation of the system energy.
We will prove in this paper an equality of the same spirit with the left-hand side (LHS) corresponding to a similar notion of time-energy cost and the right-hand side (RHS) the entanglement fidelity.
We elaborate on these two quantifications next.

The notion of time-energy cost we consider essentially corresponds to the product of the largest eigen-energy and the evolution time.
Precisely, 
the time-energy cost of a unitary matrix 
$U \in \IC^{r \times r}$ is defined as~\cite{Chau2011}:
\begin{align}
\label{eqn-definition-maxnorm-for-U}
\maxnorm{U}&=\max_{1 \le j \le r}  |\theta_j|
\end{align}
where $U$ has eigenvalues $\exp(-i E_j t/\hbar)\equiv \exp(-i \theta_j)$ for $j=1,\dots,r$ and $E_j$ are the eigenvalues of the time-independent Hamiltonian $H$~\footnote{We remark that our previous works~\cite{Chau2011,Fung:2013:Time-energy} consider more general measures by taking linear combinations of $|\theta_j|$'s.  Here, we only consider the maximum $|\theta_j|$.}.
We take the convention that all angles are in the range $(-\pi,\pi]$.
Thus, the required energy of the Hamiltonian $H$ to implement $U$ in $t$ amount of time is 
$\maxnorm{U} \hbar/t$.
In essence, time and energy are a trade-off against each other in the sense that the same $U$ may be implemented with a high-energy $H$ evolved for a short time period or a low-energy $H$ for a long time period.
The concept of time-energy cost has been naturally extended to quantum channels by considering the most efficient unitary transformation in a larger Hilbert space embedding a given quantum channel~\cite{Fung:2013:Time-energy}.
We denote the time-energy cost for a quantum channel $\mathcal K$ by $\maxnorm{\mathcal K}$ (which will be defined later in Eq.~\eqref{eqn-energy-measure-general-channel}).

The informational aspect of a quantum process is often captured by fidelity.
Here, we consider a fidelity measure for a quantum channel $\mathcal K$ which we call the minimum entanglement fidelity~\cite{Schumacher:1996:entanglement}:
\begin{align}
\label{eqn-def-fidelity-channel}
F_\text{min}({\mathcal K})
\equiv
\min_{\ket{\Psi}}
F\big(\ket{\Psi}_{AB}\bra{\Psi},(I_A \otimes {\mathcal K}_B)(\ket{\Psi}_{AB}\bra{\Psi})\big).
\end{align}
Here, $\ket{\Psi}$ is a joint state of systems $A$ and $B$, and the channel $\mathcal K$ is only applied to system $B$ in the second term on the RHS.
In essence, this measure $F_\text{min}$ involves comparing the channel input and output and obtaining the input with the minimum fidelity~\footnote{Note that it is not necessary to consider mixed states in the minimization because of the joint concavity of fidelity
$F(\sum_i p_i \rho_i, \sum_i p_i \sigma_i) \ge \sum_i p_i F(\rho_i,\sigma_i)$.}.
Note that we allow the input to be entangled with ancillary system $A$ of any dimension and the comparison is done {\it with that system included}.
The ancillary system gives greater ability to distinguish states.
Fidelity is often used to characterize the informational properties of many quantum information processing 
tasks including
quantum key distribution (as
a security measure~\cite{Konig:2007:AccessibleInformation,Ben-Or:2005:Composable}),
state discrimination (as the inconclusive probability~\cite{Ivanovic:1987:USD,Dieks:1988:USD,Peres:1988:USD}), and
information transmission 
(as a parameter for quantifying quantum Fano-type inequalities and quantum channel capacities~\cite{Schumacher:1996:entanglement,Adami:1997:capacity}).

Information processing is ultimately carried out by physical systems~\cite{Landauer1991}.
It makes sense that the processing power is related to the physical resources used.
This paper provides a partial answer in this direction in terms of $F_\text{min}({\mathcal K})$ and $\maxnorm{\mathcal K}$.

Main result ---
In this paper, we prove that for any quantum channel $\mathcal K$, 
its physical aspect (time-energy cost) is directly related to its informational aspect (fidelity):
\begin{theorem}
\label{thm-main-theorem}
{\rm
\begin{align}
\label{eqn-main-theorem}
F_\text{min}({\mathcal K})
=
\max(
\cos \maxnorm{\mathcal K},0
) .
\end{align}
Here the fidelity $F_\text{min}({\mathcal K})$ is defined in Eq.~\eqref{eqn-def-fidelity-channel} and 
the time-energy cost $\maxnorm{\mathcal K}$ is defined in Eq.~\eqref{eqn-energy-measure-general-channel}.
}
\end{theorem}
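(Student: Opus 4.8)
The plan is to translate both sides of Eq.~\eqref{eqn-main-theorem} into spectral data of the Kraus operators of $\mathcal K$ and then match them through one identity about unitary dilations. Fix a Kraus representation $\mathcal K(\rho)=\sum_k A_k\rho A_k^\dagger$ with $\sum_k A_k^\dagger A_k=I$. Since one argument of $F$ is the pure state $\ket{\Psi}_{AB}$, a direct expansion of Eq.~\eqref{eqn-def-fidelity-channel} gives $F\big(\ket{\Psi}\bra{\Psi},(I_A\otimes\mathcal K_B)(\ket{\Psi}\bra{\Psi})\big)^2=\sum_k|\tr(A_k\rho)|^2$ with $\rho=\tr_A\ket{\Psi}\bra{\Psi}$; as $\ket{\Psi}$ ranges over pure states with an arbitrary ancilla $A$, $\rho$ ranges over all density operators on $B$, so
\begin{align}
\label{eqn-plan-primal}
F_\text{min}(\mathcal K)^2=\min_{\rho}\sum_k|\tr(A_k\rho)|^2 .
\end{align}
For the cost side I use that $\maxnorm{\mathcal K}=\min_U\maxnorm{U}$ is the minimum over all unitary dilations $U$ of $\mathcal K$ (those with $\bra{k}U\ket{0}=A_k$), and that $\tfrac12(U+U^\dagger)$ has eigenvalues $\cos\theta_j$; since $|\theta_j|\le\pi$ and $\cos$ is decreasing there,
\begin{align}
\label{eqn-plan-cost}
\cos\maxnorm{\mathcal K}=\max_U\,\lambda_{\min}\!\big(\myRe\,U\big),\qquad \myRe\,U\equiv\tfrac12(U+U^\dagger).
\end{align}

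Next I would dualize Eq.~\eqref{eqn-plan-primal}. Writing $\sum_k|\tr(A_k\rho)|^2=\max_{c}\big(2\,\myRe\,\tr(B_c\rho)-\lVert c\rVert^2\big)$ with $B_c\equiv\sum_k\bar c_k A_k$ by completing the square, the objective is affine in $\rho$ on the compact convex set of density operators and concave in $c$ on the unit ball, so Sion's minimax theorem lets me swap the order. The inner minimization returns the smallest eigenvalue of the Hermitian part, and optimizing the magnitude of $c$ yields
\begin{align}
\label{eqn-plan-dual}
F_\text{min}(\mathcal K)=\max\Big(\max_{\lVert c\rVert=1}\lambda_{\min}\!\big(\myRe\,B_c\big),\,0\Big).
\end{align}
Comparing Eqs.~\eqref{eqn-plan-cost} and~\eqref{eqn-plan-dual}, the theorem reduces to the single identity $\max_{\lVert c\rVert=1}\lambda_{\min}(\myRe\,B_c)=\max_U\lambda_{\min}(\myRe\,U)$, after which applying $\max(\,\cdot\,,0)$ to both sides produces the truncation in Eq.~\eqref{eqn-main-theorem}.

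One inequality is immediate. For any dilation $U$, the block $\bra{0}U\ket{0}$ equals $B_c$ for the unit vector $c$ read off the corresponding row of the dilating isometry; since $\myRe\,B_c$ is then a principal compression of the Hermitian operator $\myRe\,U$, eigenvalue interlacing gives $\lambda_{\min}(\myRe\,U)\le\lambda_{\min}(\myRe\,B_c)$, hence $\max_U\lambda_{\min}(\myRe\,U)\le\max_{\lVert c\rVert=1}\lambda_{\min}(\myRe\,B_c)$. Through Eqs.~\eqref{eqn-plan-cost}--\eqref{eqn-plan-dual} this is exactly the bound $F_\text{min}(\mathcal K)\ge\max(\cos\maxnorm{\mathcal K},0)$; equivalently, it follows by evaluating $F$ on the state $\ket{\Psi}\ket{0}_E$ with the optimal dilation, since the fidelity is at least $|\bra{\Psi}\bra{0}(I\otimes U)\ket{\Psi}\ket{0}|\ge\myRe\bra{\Psi}\bra{0}(I\otimes U)\ket{\Psi}\ket{0}\ge\lambda_{\min}(\myRe\,U)=\cos\maxnorm{\mathcal K}$.

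The reverse inequality is the crux and the step I expect to be hardest. Let $c^\star$ attain $L\equiv\max_{\lVert c\rVert=1}\lambda_{\min}(\myRe\,B_{c^\star})$ and set $\mu_0=\arccos L$. Rotating the Kraus basis so that $A_0=B_{c^\star}$, a contraction with $\myRe\,A_0\succeq L\,I$, I must exhibit \emph{one} dilation $U$ whose eigenphases all lie in $[-\mu_0,\mu_0]$, i.e. $\myRe\,U\succeq L\,I$; by Eq.~\eqref{eqn-plan-cost} this forces $\cos\maxnorm{\mathcal K}\ge L$ and hence the desired $F_\text{min}(\mathcal K)\le\max(\cos\maxnorm{\mathcal K},0)$. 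Concretely I would look for $U=\exp(-iG)$ with $G$ Hermitian and $\lVert G\rVert\le\mu_0$ extending the Stinespring isometry $\ket{\psi}\ket{0}\mapsto\sum_k A_k\ket{\psi}\ket{k}$. The condition $\myRe\,A_0\succeq L\,I$ makes the candidate $\ket{\psi_0}\ket{0}$, with $\ket{\psi_0}$ the lowest eigenvector of $\myRe\,A_0$, attain the value $L$; the genuine work is to fix the free off-diagonal blocks of $U$ so that $\ket{\psi_0}\ket{0}$ is an actual minimizing eigenvector of $\myRe\,U$ and that no completion pushes the spectrum below $L$, which I would secure by enlarging the environment and writing $G$ in explicit block form. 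Finally I would handle the regime $L\le0$ (equivalently $\maxnorm{\mathcal K}\ge\pi/2$) separately: there Eq.~\eqref{eqn-plan-dual} already gives $F_\text{min}=0$, and I would produce directly a state with $\sum_k|\tr(A_k\rho)|^2=0$, matching $\max(\cos\maxnorm{\mathcal K},0)=0$.
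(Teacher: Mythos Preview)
Your dualization of $F_\text{min}$ via completing the square and Sion's theorem is correct and is a genuinely different, cleaner route than the paper's. The paper instead fixes the minimax optimum $(\tilde\rho,\tilde{\mathbf w})$ of Eq.~\eqref{eqn-Fmin-form2}, then shows by a first-order perturbation argument (Lemma~\ref{lemma-not-state-exists}) together with convexity of the numerical range (Lemma~\ref{lemma-exists-pure-state-for-Fmin} and Corollary~\ref{cor-no-state-on-left2}) that $\tilde{\mathbf w}$ already solves the max--min problem, arriving at Lemma~\ref{lemma-Fmin-min-over-states}: $F_\text{min}(\mathcal K)=\lambda_{\min}(\myRe K_{\tilde{\mathbf w}})$ when $F_\text{min}>0$. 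Your Sion argument delivers the same conclusion in one stroke, namely $F_\text{min}(\mathcal K)=\max\big(\max_{\lVert c\rVert=1}\lambda_{\min}(\myRe B_c),0\big)$, and avoids the somewhat delicate Lemma~\ref{lemma-not-state-exists}.

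Where you take an unnecessary detour is on the cost side. You write $\cos\maxnorm{\mathcal K}=\max_U\lambda_{\min}(\myRe U)$ and then try to establish $\max_U\lambda_{\min}(\myRe U)=\max_{\lVert c\rVert=1}\lambda_{\min}(\myRe B_c)$ by an explicit dilation construction, which you rightly flag as the hard step and leave only sketched. But this identity is exactly Theorem~\ref{thm-TE-general-solution} (Eq.~\eqref{eqn-TE-general-solution}), which the paper imports from Ref.~\cite{Fung:2014:TEproof} as a black box and never re-proves. Once you invoke Theorem~\ref{thm-TE-general-solution}, your Eq.~(\ref{eqn-plan-dual}) reads $F_\text{min}(\mathcal K)=\max(\cos\maxnorm{\mathcal K},0)$ directly, and the proof is complete; the interlacing argument and the dilation construction are then superfluous. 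So within the paper's framework your proposal is complete and shorter; the only ``gap'' is that you are attempting to reprove an external theorem the paper is allowed to quote.
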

This theorem shows that the more time-energy cost incurred by the quantum channel (or quantum process), the less similar are the (worst-case) channel input and output states.
This exact relation may also be considered as a TEUR (see Sec.~\ref{sec-TEUR}).
Note 
the similarity between Eqs.~\eqref{eqn-main-theorem} and \eqref{eqn-time-energy-delta-E}, and
the similarity between $t \Delta E /\hbar$ here and $\theta_j$ in Eq.~\eqref{eqn-definition-maxnorm-for-U}.
However, unlike most TEURs such as Eq.~\eqref{eqn-time-energy-delta-E}, our relation completely separates the physical aspect and the informational aspect in that the time-energy term $\maxnorm{\mathcal K}$ is independent of the channel state.

In addition, our result shares the same spirit as an earlier observation~\cite{Chau2011} that
the time-energy cost tightly bounds the fidelity (in the form of Bures angle).
Given two states $\ket{\psi_1}$ and $\ket{\psi_2}$ separated by Bures angle $\chi=\cos^{-1} \lvert \braket{\psi_1}{\psi_2} \rvert$, 
all unitary $U$ satisfying $\ket{\psi_2}=U \ket{\psi_1}$ must have $\maxnorm{U} \ge \chi$ (see section 3 of \cite{Chau2011}).

\section{Preliminaries}
We consider a quantum channel mapping $n$-dimensional density matrices to $n$-dimensional density matrices:
\begin{align}
{\mathcal K}(\rho)=\sum_{j=1}^d K_j \rho K_j^\dag,
\end{align}
where $K_j \in \IC^{n\times n}$ are the Kraus operators.
In this paper, we only consider finite dimensional systems.
The time-energy cost for quantum channel $\mathcal K$ is defined as~\cite{Fung:2013:Time-energy}
\dmathX2{
\maxnorm{\mathcal{K}} &\equiv& \min_U & \maxnorm{U}  
&eqn-energy-measure-general-channel\cr
&&
\text{s.t.} &
\mathcal{K}(\rho) = \tr_{C} [ U_{CB} (\ket{0}_{C}\bra{0} \otimes  \rho_B) U_{CB}^\dag ]
\: \forall \rho,
\cr
}
where channel $\mathcal{K}$ acts on quantum state $\rho$ in system $B$ and the unitary extension $U_{CB}$ includes ancillary system $C$ prepared in a standard state.
We emphasize that ancillary system $A$ in Eq.~\eqref{eqn-def-fidelity-channel} is a different system from ancillary system $C$ here.
This time-energy cost admits the following general solution~\cite{Fung:2014:TEproof}.
\begin{theorem}
\label{thm-TE-general-solution}
{\rm
\begin{align}
\label{eqn-TE-general-solution}
\maxnorm{\mathcal{K}} =
\cos^{-1}
\left[
\max_{\mathbf v}
\frac{1}{2} \lambda_\text{min} \left( K_{\mathbf v} + K_{\mathbf v}^\dag \right)
\right]
\end{align}
where ${\mathbf v} \in \IC^d$ has unit norm,
$K_{\mathbf v}=\sum_{j=1}^d v_j K_j$, $\lambda_\text{min}(\cdot)$ denotes the minimum eigenvalue of its argument,
and
we take the convention that $\cos^{-1}$ returns an angle in the range $[0,\pi]$.
}
\end{theorem}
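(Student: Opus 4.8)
The plan is to recast the minimization over unitary dilations as a maximization of the smallest eigenvalue of a Hermitian part, and then to evaluate that maximization by two matching bounds. First I would observe that for any unitary $U$ with eigenphases $\theta_j\in(-\pi,\pi]$ one has $\cos\maxnorm{U}=\min_j\cos\theta_j=\lambda_\text{min}\big(\frac{1}{2}(U+U^\dag)\big)$, since the Hermitian part $\frac{1}{2}(U+U^\dag)$ is diagonal in the eigenbasis of $U$ with eigenvalues $\cos\theta_j$. Because $\cos$ is decreasing on $[0,\pi]$ and $\maxnorm{\mathcal K}=\min_U\maxnorm{U}$ over all dilations in Eq.~\eqref{eqn-energy-measure-general-channel}, this converts the defining minimization into
\[
\cos\maxnorm{\mathcal K}=\max_U\lambda_\text{min}\Big(\tfrac{1}{2}(U+U^\dag)\Big),
\]
the maximum being over all unitary $U$ on $C\otimes B$ (any ancilla dimension) that reproduce the channel. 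Applying $\cos^{-1}$ then reduces Theorem~\ref{thm-TE-general-solution} to the single matrix identity $\max_U\lambda_\text{min}\big(\frac12(U+U^\dag)\big)=\max_{\mathbf v}\frac12\lambda_\text{min}(K_{\mathbf v}+K_{\mathbf v}^\dag)$.

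Next I would invoke the Stinespring form of the constraint: writing the input ancilla state as $\ket{a}_C$ and $U\ket{a}_C\ket{\phi}=\sum_j\ket{b_j}_C K_j\ket{\phi}$ for an orthonormal family $\{\ket{b_j}\}$, the admissible $U$ are exactly the unitary completions of this isometry. It suffices to take the minimal $d$-dimensional ancilla with $\{\ket{b_j}\}$ an orthonormal basis, for which $\ket{a}$ automatically lies in the span of the outputs (enlarging the ancilla can only introduce additional eigenphases and never lowers $\maxnorm{U}$). The upper bound ($\le$) is then the easy direction: for any such $U$ set $v_j=\langle a|b_j\rangle$, a unit vector, and note $\bra{a}U\ket{a}=\sum_j v_j K_j=K_{\mathbf v}$. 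Testing $\lambda_\text{min}$ on the vector $\ket{a}\otimes\ket{\phi}$ with $\ket{\phi}$ the lowest eigenvector of $\frac12(K_{\mathbf v}+K_{\mathbf v}^\dag)$ gives $\lambda_\text{min}\big(\frac12(U+U^\dag)\big)\le\langle a,\phi|\tfrac12(U+U^\dag)|a,\phi\rangle=\frac12\lambda_\text{min}(K_{\mathbf v}+K_{\mathbf v}^\dag)\le\max_{\mathbf u}\frac12\lambda_\text{min}(K_{\mathbf u}+K_{\mathbf u}^\dag)$.

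The reverse inequality ($\ge$), i.e.\ achievability, is where the real work lies. Fixing a maximizer $\mathbf v^\ast$ and writing $c^\ast=\frac12\lambda_\text{min}(K_{\mathbf v^\ast}+K_{\mathbf v^\ast}^\dag)$, I would exploit the unitary freedom of the Kraus representation to rotate $\mathbf v^\ast$ to $(1,0,\dots,0)$, so the distinguished operator becomes $K_1=K_{\mathbf v^\ast}$, a contraction ($K_1^\dag K_1\preceq\sum_j K_j^\dag K_j=I$) with $\frac12(K_1+K_1^\dag)\succeq c^\ast I$. The goal is then to build a unitary completion $U$ whose Hermitian part is $\succeq c^\ast I$. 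Working in block form with respect to $\mathcal S=\ket{a}\otimes B$ and $\mathcal S^\perp$, with $(a,a)$-block $K_1$ and first block-column the isometry, I would choose the off-diagonal block to cancel the off-diagonal part of $\frac12(U+U^\dag)$; when $K_1$ is normal this collapses to the explicit block rotation $U=\bigl(\begin{smallmatrix}K_1 & -(I-K_1^\dag K_1)^{1/2}\\ (I-K_1^\dag K_1)^{1/2} & K_1\end{smallmatrix}\bigr)$, whose Hermitian part is $\mathrm{diag}\big(\frac12(K_1+K_1^\dag),\frac12(K_1+K_1^\dag)\big)\succeq c^\ast I$, so that $\lambda_\text{min}\big(\frac12(U+U^\dag)\big)=c^\ast$ exactly and $\maxnorm{U}=\cos^{-1}c^\ast$.

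I expect the main obstacle to be this completion step for a general, non-normal $K_1$: the polar factor of $K_1$ and its Hermitian part no longer commute, so the clean rotation above must be replaced by a completion whose corner block $Z$ (constrained by $Z^\dag Z=K_1^\dag K_1$) is phased so as to keep both the off-diagonal Hermitian-part block controlled and $\frac12(Z+Z^\dag)\succeq c^\ast I$. Verifying that such $Z$ always exists --- equivalently, that the necessary spectral condition $\frac12(K_{\mathbf v^\ast}+K_{\mathbf v^\ast}^\dag)\succeq c^\ast I$ defining $c^\ast$ is also \emph{sufficient} for a unitary completion with all eigenphases $\le\cos^{-1}c^\ast$ --- is the crux. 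I would attack it either through the Halmos-type unitary-dilation formalism for contractions, or equivalently by constructing a bounded Hermitian generator $G$ with $\lVert G\rVert\le\cos^{-1}c^\ast$ and $e^{-iG}\ket{a}\ket{\phi}=\sum_j\ket{b_j}K_j\ket{\phi}$, reading off $\maxnorm{U}=\lVert G\rVert$ directly. Matching this achievability against the upper bound closes the identity and hence Theorem~\ref{thm-TE-general-solution}.
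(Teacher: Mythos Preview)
The paper does not actually prove Theorem~\ref{thm-TE-general-solution}; it is quoted as a known result with the proof deferred to an external reference (Ref.~\cite{Fung:2014:TEproof} in the paper). So there is no in-paper argument to compare against, and your proposal stands on its own.

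On its merits: the reduction $\cos\maxnorm{U}=\lambda_{\min}\big(\tfrac12(U+U^\dag)\big)$ and the resulting reformulation $\cos\maxnorm{\mathcal K}=\max_U\lambda_{\min}\big(\tfrac12(U+U^\dag)\big)$ are correct and clean, and your ``easy'' inequality via testing on $\ket{a}\otimes\ket{\phi}$ is fine once $\mathbf v$ has unit norm. Two points deserve care. First, the throwaway remark that ``enlarging the ancilla can only introduce additional eigenphases and never lowers $\maxnorm{U}$'' is in the wrong direction: a larger ancilla gives \emph{more} freedom in completing the isometry, so a priori $\min_U\maxnorm{U}$ could decrease. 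That it does not is part of what the theorem asserts, and in your upper-bound step it matters because with an oversized ancilla your vector $v_j=\langle a|b_j\rangle$ can have norm strictly less than one, and then $\tfrac12\lambda_{\min}(K_{\mathbf v}+K_{\mathbf v}^\dag)$ need not be bounded by $\max_{\|\mathbf u\|=1}\tfrac12\lambda_{\min}(K_{\mathbf u}+K_{\mathbf u}^\dag)$ when the latter is negative. Second, and more seriously, your achievability construction does not implement the channel $\mathcal K$. The $2\times2$ block matrix you write down is the Halmos dilation of the single contraction $K_1$; its first block-column is $\bigl(K_1,(I-K_1^\dag K_1)^{1/2}\bigr)^{T}$, which corresponds to a channel with Kraus operators $K_1$ and $(I-K_1^\dag K_1)^{1/2}$, not the original $\{K_j\}_{j=1}^d$. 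The Stinespring constraint fixes the \emph{entire} first block-column to be the isometry $V\ket{\phi}=\sum_{j}\ket{b_j}K_j\ket{\phi}$, so for $d>2$ the off-$\mathcal S$ block lives in a $(d-1)n$-dimensional space and cannot be the $n\times n$ matrix $(I-K_1^\dag K_1)^{1/2}$. Even for $d=2$, the given $K_2$ is only $W(I-K_1^\dag K_1)^{1/2}$ for some partial isometry $W$ you do not control, which spoils the cancellation of the off-diagonal Hermitian part. So the ``normal $K_1$'' case is not actually settled by your formula, and the genuine work---building a unitary completion of the \emph{full} Stinespring isometry with all eigenphases bounded by $\cos^{-1}c^\ast$---remains to be done in both the normal and non-normal cases.
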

For a class of channels which includes the depolarizing channel, a simple closed-form solution has been found~\cite{Fung:2013:Time-energy}.

\section{Proof of our main result}

Here, we prove our main result, Theorem~\ref{thm-main-theorem}.
We first compute the fidelity in Eq.~\eqref{eqn-def-fidelity-channel} for a fixed input state $\ket{\Psi}_{AB}$:
\begin{align}
&
F\big(\ket{\Psi}_{AB}\bra{\Psi},(I_A \otimes {\mathcal K}_B)(\ket{\Psi}_{AB}\bra{\Psi})\big)
\nonumber
\\
=&
\sqrt{
\sum_i \lvert \bra{\Psi} (I_A \otimes K_i) \ket{\Psi} \rvert^2
}
\nonumber
\\
=&
\sqrt{
\sum_i \lvert \tr (\rho_B K_i) \rvert^2
}
\equiv F_e(\rho_B,{\mathcal K})
\label{eqn-fidelity-form1}
\end{align}
where $\rho_B=\tr_A(\ket{\Psi}_{AB}\bra{\Psi})$.
The fidelity $F_e$ is known as the {\em entanglement fidelity} of the channel $\mathcal K$~\cite{Schumacher:1996:entanglement}, and
is independent of which purification $\ket{\Psi}_{AB}$ is used.
Another way to express $F_e$ is
\begin{align}
F_e(\rho_B,{\mathcal K})=\max_{\bf w} \left\lvert \sum_i w_i \tr (\rho_B K_i) \right\rvert
\label{eqn-fidelity-form2}
\end{align}
where ${\bf w} \in \IC^d$ has unit norm.
This follows either from the Cauchy-Schwarz inequality with the solution
\begin{equation}
\label{eqn-fidelity-solution-for-w}
w_i=
\frac{
\tr^\dag (\rho_B K_i)
}{
\sqrt{
\sum_i \lvert \tr (\rho_B K_i) \rvert^2
}
}
\end{equation}
or from the purification definition of fidelity using Uhlmann's theorem~\cite{Uhlmann1976,*Jozsa1994}.

Using Eq.~\eqref{eqn-fidelity-form2},
the minimum entanglement fidelity of the channel is
\begin{equation}
F_\text{min}({\mathcal K}) = 
\min_{\rho_B}
\max_{\bf w} \left\lvert \sum_{i=1}^d w_i \tr (\rho_B K_i) \right\rvert
\label{eqn-Fmin-form2}
\end{equation}
and we denote the optimal solution by $\tilde{\rho}_B$ and $\tilde{\bf w}$ which is given by Eq.~\eqref{eqn-fidelity-solution-for-w} with $\rho_B \rightarrow \tilde{\rho}_B$.
Furthermore, using Eq.~\eqref{eqn-fidelity-form1}, we have
\begin{align}
F_\text{min}({\mathcal K}) = 
\sqrt{
\sum_{i=1}^d \lvert \tr (\tilde{\rho}_B K_i) \rvert^2
} .
\label{eqn-Fmin-form1}
\end{align}

\begin{lemma}
\label{lemma-not-state-exists}
{\rm
There does not exist a state $\ket{\psi'}_B$ such that 
$$
0 \leq \myRe \left( \bra{\psi'} \sum_i \tilde{w}_i K_i \ket{\psi'} \right) < F_\text{min}({\mathcal K}) .
$$
}
\end{lemma}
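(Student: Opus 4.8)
The plan is to prove the stronger statement that $\myRe\!\big(\bra{\psi'} M \ket{\psi'}\big)\ge F_\text{min}(\mathcal K)$ for \emph{every} pure state $\ket{\psi'}_B$, where $M\equiv\sum_i \tilde w_i K_i=K_{\tilde{\mathbf w}}$; this a fortiori forbids any value in $[0,F_\text{min}(\mathcal K))$. When $F_\text{min}(\mathcal K)=0$ the interval is empty and there is nothing to show, so I assume $F_\text{min}(\mathcal K)>0$. The driving idea is to exploit the optimality of the minimizer $\tilde\rho_B$ appearing in Eq.~\eqref{eqn-Fmin-form1}: since $F_e(\rho,\mathcal K)=\sqrt{\sum_i|\tr(\rho K_i)|^2}$ is the Euclidean norm of the linear map $\rho\mapsto(\tr(\rho K_i))_i$, it is convex on the convex compact set of density matrices and attains its minimum $F_\text{min}(\mathcal K)$ at $\tilde\rho_B$.

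First I would record the form of the matched optimizer from Eq.~\eqref{eqn-fidelity-solution-for-w}, namely $\overline{\tr(\tilde\rho_B K_i)}=F_\text{min}(\mathcal K)\,\tilde w_i$, which gives $\sum_i \overline{\tr(\tilde\rho_B K_i)}\,\tr(\sigma K_i)=F_\text{min}(\mathcal K)\,\tr(\sigma M)$ for any operator $\sigma$ and, using $\|\tilde{\mathbf w}\|=1$, the identity $\tr(\tilde\rho_B M)=F_\text{min}(\mathcal K)$. Then, for an arbitrary pure state $\ket{\psi'}_B$, I form the feasible family $\rho_t=(1-t)\tilde\rho_B+t\ket{\psi'}_B\bra{\psi'}$ of density matrices and differentiate the smooth (indeed quadratic) quantity $F_e(\rho_t,\mathcal K)^2=\sum_i|\tr(\rho_t K_i)|^2$ at $t=0$. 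Substituting the two identities above collapses the derivative to $2F_\text{min}(\mathcal K)\big[\myRe(\bra{\psi'}M\ket{\psi'})-F_\text{min}(\mathcal K)\big]$.

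Since $\tilde\rho_B$ is a minimizer and $\rho_t$ remains a density matrix for $t\in[0,1]$, the right derivative at $t=0$ is nonnegative, and dividing by $2F_\text{min}(\mathcal K)>0$ delivers $\myRe(\bra{\psi'}M\ket{\psi'})\ge F_\text{min}(\mathcal K)$, which establishes the lemma. I expect the only delicate point to be the clean invocation of optimality: one should differentiate $F_e^2$ rather than $F_e$, so as to sidestep any non-smoothness of the square root and to expose the elementary quadratic computation, and one must keep the matched weights $\tilde{\mathbf w}$ of Eq.~\eqref{eqn-fidelity-solution-for-w} firmly tied to $\tilde\rho_B$ throughout the algebra. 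As a bonus, minimizing over $\sigma=\ket{\psi'}_B\bra{\psi'}$ shows $\lambda_\text{min}\!\big(\tfrac12(M+M^\dag)\big)=F_\text{min}(\mathcal K)$, which is precisely the link to Theorem~\ref{thm-TE-general-solution} that the remainder of the proof of Theorem~\ref{thm-main-theorem} will use.
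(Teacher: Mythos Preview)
Your proof is correct and follows essentially the same route as the paper: form the segment $\rho_t=(1-t)\tilde\rho_B+t\ket{\psi'}\bra{\psi'}$, differentiate $F_e^2$ at $t=0$ using $\overline{\tr(\tilde\rho_B K_i)}=F_\text{min}(\mathcal K)\,\tilde w_i$, and invoke optimality of $\tilde\rho_B$. The only cosmetic differences are that the paper phrases it as a contradiction and keeps the (unused) hypothesis $\myRe(\cdot)\ge0$, whereas you argue directly and in fact obtain what the paper later records as Corollary~\ref{cor-no-state-on-left2}; note, however, that your ``bonus'' claim $\lambda_\text{min}\!\big(\tfrac12(M+M^\dag)\big)=F_\text{min}(\mathcal K)$ only yields the inequality $\ge$ from your argument---the reverse direction still needs a state achieving the value, i.e.\ Lemma~\ref{lemma-exists-pure-state-for-Fmin}.
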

\begin{proof}
We prove by contradiction.
Suppose that such a state $\ket{\psi'}_B$ exists.
We form a new state $\rho_B = (1-\alpha)\tilde{\rho}_B + \alpha \ket{\psi'}_B \bra{\psi'}$ where $\alpha >0$ is a small parameter and calculate the squared fidelity for this state using Eq.~\eqref{eqn-fidelity-form1}:
\begin{align*}
F_e^2(\rho_B,{\mathcal K})
=&
\sum_i 
\Big\{
(1-\alpha)^2 \lvert \tr (\tilde{\rho}_B K_i) \rvert^2
+
\alpha^2 
\lvert \bra{\psi'} K_i \ket{\psi'} \rvert^2
\\
&
\phantom{x}
+2(1-\alpha)\alpha \myRe \left( \tr^\dag (\tilde{\rho}_B K_i) \bra{\psi'} K_i \ket{\psi'} \right)
\Big\} .
\end{align*}
For $\alpha\rightarrow 0$, all the second-order terms become negligible and 
the change as a function of $\alpha$ is
$$
\frac{\partial F_e^2}{2 \partial\alpha}=
\sum_i 
\Big\{
\myRe \left( \tr^\dag (\tilde{\rho}_B K_i) \bra{\psi'} K_i \ket{\psi'} \right)
-
\lvert \tr (\tilde{\rho}_B K_i) \rvert^2
\Big\}
.
$$
Note that the second term on the right is $F_\text{min}^2({\mathcal K})$ (see Eq.~\eqref{eqn-Fmin-form1}).
With the help of Eq.~\eqref{eqn-fidelity-solution-for-w},
the first term on the right can be expressed as
$$
\sqrt{
\sum_j \lvert \tr (\tilde{\rho}_B K_j) \rvert^2
}
\myRe 
\left(
\sum_i 
\left( \tilde{w}_i \bra{\psi'} K_i \ket{\psi'} \right)
\right).
$$
This means that if the claimed state $\ket{\psi'}$ exists, $\partial F_e^2/\partial\alpha < 0$ and this contradicts with that fact that when $\alpha=0$, $F_e^2=F_\text{min}^2({\mathcal K})$ which is already the minimum and cannot become smaller with any other states.
\end{proof}
Note that Lemma~\ref{lemma-not-state-exists} does not cover the trivial case of $F_\text{min}({\mathcal K})=0$.
\begin{lemma}
\label{lemma-exists-pure-state-for-Fmin}
{\rm
There exists a state $\ket{\psi'}_B$ such that 
\begin{equation*}
\bra{\psi'} \sum_i \tilde{w}_i K_i \ket{\psi'} = F_\text{min}({\mathcal K}) .
\end{equation*}
}
\end{lemma}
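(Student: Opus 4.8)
The plan is to exhibit the desired pure state as an eigenvector lying in a specific subspace singled out by Lemma~\ref{lemma-not-state-exists}. Write $M \equiv \sum_i \tilde{w}_i K_i$ and split it into Hermitian and anti-Hermitian parts, $M = R + i S$ with $R = \tfrac12(M + M^\dag)$ and $S = \tfrac{1}{2i}(M - M^\dag)$ both Hermitian. First I would record the key identity: substituting the explicit optimizer of Eq.~\eqref{eqn-fidelity-solution-for-w} (with $\rho_B \to \tilde{\rho}_B$) into $\tr(\tilde{\rho}_B M)$ gives $\tr(\tilde{\rho}_B M) = \sum_i \tilde{w}_i \tr(\tilde{\rho}_B K_i) = \sum_i |\tr(\tilde{\rho}_B K_i)|^2 / F_\text{min}({\mathcal K}) = F_\text{min}({\mathcal K})$, using Eq.~\eqref{eqn-Fmin-form1}. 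This number is real and positive (we assume the nontrivial case $F_\text{min}({\mathcal K})>0$), so separating real and imaginary parts yields $\tr(\tilde{\rho}_B R) = F_\text{min}({\mathcal K})$ and $\tr(\tilde{\rho}_B S) = 0$. The goal $\bra{\psi'} M \ket{\psi'} = F_\text{min}({\mathcal K})$ then reduces to finding a unit vector with $\bra{\psi'} R \ket{\psi'} = F_\text{min}({\mathcal K})$ and $\bra{\psi'} S \ket{\psi'} = 0$ simultaneously.

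The central step is to upgrade Lemma~\ref{lemma-not-state-exists} into the statement $\lambda_{\text{min}}(R) = F_\text{min}({\mathcal K})$. Since $R$ is Hermitian, its numerical range $\{\bra{\psi} R \ket{\psi}\}$ equals the interval $[\lambda_{\text{min}}(R), \lambda_{\text{max}}(R)]$ (the spectral theorem writes $\bra{\psi} R \ket{\psi}$ as a convex combination of eigenvalues, and every intermediate value is attained by continuity on the sphere). Now $F_\text{min}({\mathcal K}) = \tr(\tilde{\rho}_B R)$ lies in this interval, so $\lambda_{\text{min}}(R) \le F_\text{min}({\mathcal K}) \le \lambda_{\text{max}}(R)$; in particular $\lambda_{\text{max}}(R) \ge F_\text{min}({\mathcal K}) > 0$. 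If we had $\lambda_{\text{min}}(R) < 0$, then the interval would contain all of $[0, F_\text{min}({\mathcal K}))$, producing a unit vector with $0 \le \myRe(\bra{\psi} M \ket{\psi}) < F_\text{min}({\mathcal K})$ and contradicting Lemma~\ref{lemma-not-state-exists}. Hence $\lambda_{\text{min}}(R) \ge 0$; since $\lambda_{\text{min}}(R)$ is itself an attained value of $\myRe(\bra{\psi} M \ket{\psi})$, Lemma~\ref{lemma-not-state-exists} forbids it from lying in $[0,F_\text{min}({\mathcal K}))$, so $\lambda_{\text{min}}(R) \ge F_\text{min}({\mathcal K})$, and combined with $\lambda_{\text{min}}(R) \le F_\text{min}({\mathcal K})$ we conclude $\lambda_{\text{min}}(R) = F_\text{min}({\mathcal K})$.

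Knowing the minimum is attained exactly at $F_\text{min}({\mathcal K})$ lets me localize $\tilde{\rho}_B$. Because $R - F_\text{min}({\mathcal K}) I \succeq 0$ and $\tr[\tilde{\rho}_B (R - F_\text{min}({\mathcal K}) I)] = 0$ with $\tilde{\rho}_B \succeq 0$, the support of $\tilde{\rho}_B$ must lie entirely in the eigenspace $V = \ker(R - F_\text{min}({\mathcal K}) I)$, on which $R$ acts as $F_\text{min}({\mathcal K}) I$. Thus every unit vector in $V$ already satisfies $\bra{\psi} R \ket{\psi} = F_\text{min}({\mathcal K})$, and only the imaginary part remains. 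Restricting to $V$, the compression $S|_V$ is Hermitian and $\tr(\tilde{\rho}_B S|_V) = \tr(\tilde{\rho}_B S) = 0$, so $0$ lies in the (interval) numerical range of $S|_V$; picking a unit vector $\ket{\psi'} \in V$ with $\bra{\psi'} S \ket{\psi'} = 0$ gives $\bra{\psi'} M \ket{\psi'} = F_\text{min}({\mathcal K}) + i\cdot 0 = F_\text{min}({\mathcal K})$, as required.

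I expect the main obstacle to be conceptual rather than computational: $\tilde{\rho}_B$ is generically \emph{mixed} (the squared fidelity $\sum_i|\tr(\rho_B K_i)|^2$ is convex in $\rho_B$, so its minimizer need not be an extreme point), hence no single eigenvector of $\tilde{\rho}_B$ need achieve $F_\text{min}({\mathcal K})$, and one must additionally cancel the imaginary part of $\bra{\psi'} M \ket{\psi'}$. The argument above handles both at once, using Lemma~\ref{lemma-not-state-exists} to pin $\tilde{\rho}_B$ to the minimal eigenspace of $\myRe(M)$ and then the convexity of the numerical range of the Hermitian operator $\myIm(M)$. Equivalently, one could bypass Lemma~\ref{lemma-not-state-exists} and argue in one line that $F_\text{min}({\mathcal K}) = \tr(\tilde{\rho}_B M) = \sum_k p_k \bra{\phi_k} M \ket{\phi_k}$ (for a spectral decomposition $\tilde{\rho}_B = \sum_k p_k \ket{\phi_k}\bra{\phi_k}$) is a convex combination of points of the numerical range of $M$, which is convex by the Toeplitz--Hausdorff theorem, whence $F_\text{min}({\mathcal K})$ itself lies in that numerical range. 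As before, the case $F_\text{min}({\mathcal K})=0$ is excluded.
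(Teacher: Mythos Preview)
Your proposal is correct. In fact, the one-line alternative in your final paragraph via the Toeplitz--Hausdorff theorem is exactly the paper's proof: the paper simply observes that $F_\text{min}({\mathcal K}) = \tr\big(\tilde{\rho}_B \sum_i \tilde{w}_i K_i\big)$ is a convex combination of values $\bra{\phi_k} M \ket{\phi_k}$ and invokes convexity of the numerical range of $M$ directly.

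Your main argument takes a genuinely different route. Rather than appealing to Toeplitz--Hausdorff for the non-Hermitian $M$, you split $M = R + iS$, use Lemma~\ref{lemma-not-state-exists} to pin down $\lambda_{\text{min}}(R) = F_\text{min}({\mathcal K})$, localize $\tilde{\rho}_B$ to the minimal eigenspace $V$ of $R$, and then handle the imaginary part via the (elementary, interval-valued) numerical range of the Hermitian compression $S|_V$. This is longer but more constructive, and it actually extracts more structure than needed: the identity $\lambda_{\text{min}}(R) = F_\text{min}({\mathcal K})$ is precisely the content of the paper's subsequent Corollary~\ref{cor-no-state-on-left2} and Lemma~\ref{lemma-Fmin-min-over-states}, so your argument in effect collapses those three results into a single proof. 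The paper's approach instead keeps Lemma~\ref{lemma-exists-pure-state-for-Fmin} logically independent of Lemma~\ref{lemma-not-state-exists} and combines them only afterwards. The price of your route is the explicit assumption $F_\text{min}({\mathcal K}) > 0$ (needed to invoke Lemma~\ref{lemma-not-state-exists}), whereas the paper's Toeplitz--Hausdorff argument does not require it.
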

\begin{proof}
We first recognize that the LHS is a numerical range.
Recall that the numerical range of an operator $K \in \IC^{n \times n}$ is defined as
$$
W(K)=\{ \bra{\psi'} K \ket{\psi'} : \ket{\psi'} \in \IC^n,  \braket{\psi'}{\psi'}=1 \} .
$$
Any numerical range is convex in the sense that if
$\bra{\psi'} K \ket{\psi'}$ and
$\bra{\psi''} K \ket{\psi''}$ are in $W(K)$,
then for $0 \le \alpha \le 1$,
$\alpha \bra{\psi'} K \ket{\psi'}+(1-\alpha)\bra{\psi''} K \ket{\psi''}$ 
is in $W(K)$.

Note that $F_\text{min}({\mathcal K})$ in Eq.~\eqref{eqn-Fmin-form2} can be expressed as
$$
F_\text{min}({\mathcal K})=\tr \left( \tilde{\rho}_B \sum_i \tilde{w}_i K_i \right) ,
$$
which is real.
Since the numerical range of $\sum_i \tilde{w}_i K_i$ is convex and any mixed state is a linear combination of pure states,
for any $\rho_B$, there exists $\ket{\psi}$ such that
$\tr \Big( \rho_B \sum_i \tilde{w}_i K_i \Big)=
\bra{\psi} \sum_i \tilde{w}_i K_i \ket{\psi}
$.
This completes the proof.
\end{proof}
\begin{corollary}
\label{cor-no-state-on-left2}
{\rm
If $F_\text{min}({\mathcal K})>0$,
there does not exist a state $\ket{\psi'}_B$ such that 
\begin{equation*}
\myRe \left( \bra{\psi'} \sum_i \tilde{w}_i K_i \ket{\psi'} \right) < F_\text{min}({\mathcal K}) .
\end{equation*}
}
\end{corollary}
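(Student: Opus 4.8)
The plan is to prove this by contradiction, viewing the corollary as the extension of Lemma~\ref{lemma-not-state-exists} into the region where the real part is negative. Write $M \equiv \sum_i \tilde{w}_i K_i$ for brevity. Lemma~\ref{lemma-not-state-exists} already forbids any state whose value $\bra{\psi'} M \ket{\psi'}$ has real part in the interval $[0, F_\text{min}({\mathcal K}))$, so the only new case to rule out is that of a state $\ket{\psi'}_B$ with $\myRe(\bra{\psi'} M \ket{\psi'}) < 0$.

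Suppose such a state exists and set $z_1 \equiv \bra{\psi'} M \ket{\psi'} \in W(M)$, which satisfies $\myRe(z_1) < 0$. By Lemma~\ref{lemma-exists-pure-state-for-Fmin}, the real point $z_2 \equiv F_\text{min}({\mathcal K}) > 0$ also lies in $W(M)$. The key tool is the convexity of the numerical range $W(M)$: the entire segment $\{(1-t) z_1 + t z_2 : 0 \le t \le 1\}$ is therefore contained in $W(M)$, and every point of this segment is realized as $\bra{\phi} M \ket{\phi}$ for some unit vector $\ket{\phi}$.

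Next I would track the real part along this segment. As $t$ runs from $0$ to $1$, the quantity $\myRe\big((1-t) z_1 + t z_2\big) = (1-t)\myRe(z_1) + t\, F_\text{min}({\mathcal K})$ is continuous and strictly increases from $\myRe(z_1) < 0$ up to $F_\text{min}({\mathcal K})$. By the intermediate value theorem there is some $t^\star \in (0,1)$ at which this real part equals, say, $F_\text{min}({\mathcal K})/2$, which lies in $[0, F_\text{min}({\mathcal K}))$ precisely because $F_\text{min}({\mathcal K}) > 0$. The corresponding point of the segment lies in $W(M)$, so there is a state realizing it, and that state violates Lemma~\ref{lemma-not-state-exists}. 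This contradiction eliminates the negative-real-part case, and combining it with Lemma~\ref{lemma-not-state-exists} yields the claim.

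The argument is short, so I do not anticipate a serious obstacle; the only point demanding care is the role of the hypothesis $F_\text{min}({\mathcal K}) > 0$. It is exactly this assumption that makes the interval $[0, F_\text{min}({\mathcal K}))$ nonempty, ensuring the intermediate value argument lands strictly inside the region forbidden by Lemma~\ref{lemma-not-state-exists} rather than collapsing onto the single boundary point $F_\text{min}({\mathcal K})$ itself, where Lemma~\ref{lemma-exists-pure-state-for-Fmin} guarantees a state does exist.
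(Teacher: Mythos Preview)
Your proposal is correct and is exactly the argument the paper has in mind: the paper's proof consists of the single sentence ``This follows from Lemma~\ref{lemma-not-state-exists}, Lemma~\ref{lemma-exists-pure-state-for-Fmin}, and the fact that any numerical range is convex,'' and you have simply spelled out in detail how these three ingredients combine (via the segment and intermediate value argument) to rule out the negative-real-part case.
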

\begin{proof}
This follows from Lemma~\ref{lemma-not-state-exists}, 
Lemma~\ref{lemma-exists-pure-state-for-Fmin}, and
the fact that any numerical range is convex.
\end{proof}
\begin{lemma}
\label{lemma-Fmin-min-over-states}
{\rm
If $F_\text{min}({\mathcal K})>0$,
\begin{equation}
F_\text{min}({\mathcal K}) = \min_{\ket{\psi}} 
\myRe \left( \bra{\psi} \sum_i \tilde{w}_i K_i \ket{\psi} \right)
\label{eqn-Fmin-form-min-only}
\end{equation}
where the minimization is over all normalized pure states $\ket{\psi}$.
}
\end{lemma}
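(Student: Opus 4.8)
The plan is to obtain this lemma as an immediate packaging of the two bounds already in hand: Corollary~\ref{cor-no-state-on-left2} supplies the lower bound on the real part of the expectation, and Lemma~\ref{lemma-exists-pure-state-for-Fmin} exhibits a state that attains it. Since the hard analytic content (the perturbation argument and the convexity of the numerical range) has already been carried out, what remains is only to assemble these facts.

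First I would read off from Corollary~\ref{cor-no-state-on-left2} that, under the hypothesis $F_\text{min}({\mathcal K})>0$, every normalized pure state $\ket{\psi}_B$ satisfies
\begin{equation*}
\myRe \left( \bra{\psi} \sum_i \tilde{w}_i K_i \ket{\psi} \right) \ge F_\text{min}({\mathcal K}),
\end{equation*}
because the corollary rules out any state whose real part falls strictly below $F_\text{min}({\mathcal K})$. Taking the infimum over all $\ket{\psi}$ then gives $\min_{\ket{\psi}} \myRe(\,\cdots) \ge F_\text{min}({\mathcal K})$.

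Next I would invoke Lemma~\ref{lemma-exists-pure-state-for-Fmin}, which produces a particular normalized state $\ket{\psi'}_B$ with $\bra{\psi'} \sum_i \tilde{w}_i K_i \ket{\psi'} = F_\text{min}({\mathcal K})$. Because $F_\text{min}({\mathcal K})$ is real (it equals $\tr(\tilde{\rho}_B \sum_i \tilde{w}_i K_i)$, as established in the proof of that lemma), the real part of this expectation also equals $F_\text{min}({\mathcal K})$. Hence the lower bound is achieved, the infimum is in fact a minimum, and together these two observations yield the claimed equality in Eq.~\eqref{eqn-Fmin-form-min-only}.

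The work here is entirely a wrap-up step, so I do not expect a serious obstacle; the only point requiring care is to confirm that the quantity being minimized (the $\myRe$ of the expectation) coincides with the full complex expectation at the optimizer, which holds precisely because the optimal value $F_\text{min}({\mathcal K})$ is real. I would also remark that the hypothesis $F_\text{min}({\mathcal K})>0$ cannot be dropped: without it one has only Lemma~\ref{lemma-not-state-exists}, which permits states with negative real part, so the lower-bound step routed through Corollary~\ref{cor-no-state-on-left2} would no longer be available.
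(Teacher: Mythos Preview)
Your proposal is correct and matches the paper's own proof, which is simply the one-line statement ``This follows from Lemma~\ref{lemma-exists-pure-state-for-Fmin} and Corollary~\ref{cor-no-state-on-left2}.'' You have spelled out the two-sided bound (lower bound from the corollary, attainment from the lemma) in the same way the reader is expected to reconstruct it.
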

\begin{proof}
This follows from 
Lemma~\ref{lemma-exists-pure-state-for-Fmin} and Corollary~\ref{cor-no-state-on-left2}.
\end{proof}
\begin{theorem}
\label{thm-Fmin-ge-maxnorm}
{\rm
$$
F_\text{min}({\mathcal K}) \ge
\cos \maxnorm{\mathcal K}.
$$
}
\end{theorem}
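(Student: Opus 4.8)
The plan is to write both sides of the claimed inequality as optimizations of one and the same real-valued function and then invoke the elementary weak min-max (saddle-point) inequality $\sup\inf \le \inf\sup$. To this end I would introduce the function $g(\rho,\mathbf{w}) = \myRe\,\tr(\rho\,K_{\mathbf{w}})$, where $K_{\mathbf{w}}=\sum_i w_i K_i$, the vector $\mathbf{w}\in\IC^d$ ranges over unit vectors, and $\rho$ ranges over $n$-dimensional density matrices. The goal is to establish $\cos\maxnorm{\mathcal K}=\max_{\mathbf{v}}\min_{\rho} g(\rho,\mathbf{v})$ and $F_\text{min}(\mathcal K)=\min_{\rho}\max_{\mathbf{w}} g(\rho,\mathbf{w})$, after which the theorem is immediate.

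For the left-hand side I would start from the general solution in Theorem~\ref{thm-TE-general-solution}. Applying $\cos$ to Eq.~\eqref{eqn-TE-general-solution} and noting that $\cos$ inverts $\cos^{-1}$ on $[0,\pi]$ gives $\cos\maxnorm{\mathcal K}=\max_{\mathbf{v}}\tfrac12\lambda_\text{min}(K_{\mathbf{v}}+K_{\mathbf{v}}^\dag)$. The key rewriting then uses the Rayleigh–Ritz characterization $\lambda_\text{min}(M)=\min_{\rho}\tr(\rho M)$ for Hermitian $M$, together with the identity $\tfrac12\tr\big(\rho(K_{\mathbf{v}}+K_{\mathbf{v}}^\dag)\big)=\myRe\,\tr(\rho K_{\mathbf{v}})=g(\rho,\mathbf{v})$, which holds because $\rho$ is Hermitian. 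This yields $\cos\maxnorm{\mathcal K}=\max_{\mathbf{v}}\min_{\rho} g(\rho,\mathbf{v})$.

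For the right-hand side I would start from the min-max expression for the fidelity already derived in Eq.~\eqref{eqn-Fmin-form2}, namely $F_\text{min}(\mathcal K)=\min_{\rho}\max_{\mathbf{w}}\lvert\sum_i w_i\tr(\rho K_i)\rvert$. The one small observation needed is that, because $\mathbf{w}$ may carry an arbitrary global phase, maximizing the modulus over unit $\mathbf{w}$ coincides with maximizing the real part: for fixed $\rho$ both equal the Euclidean norm of the vector $(\tr(\rho K_i))_i$, so that $\max_{\mathbf{w}}\lvert\sum_i w_i\tr(\rho K_i)\rvert=\max_{\mathbf{w}}\myRe\,\tr(\rho K_{\mathbf{w}})=\max_{\mathbf{w}} g(\rho,\mathbf{w})$. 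Hence $F_\text{min}(\mathcal K)=\min_{\rho}\max_{\mathbf{w}} g(\rho,\mathbf{w})$.

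With both forms in hand, the weak min-max inequality $\max_{\mathbf{v}}\min_{\rho} g \le \min_{\rho}\max_{\mathbf{w}} g$ delivers $\cos\maxnorm{\mathcal K}\le F_\text{min}(\mathcal K)$ at once; no positivity hypothesis is required, since the bound is trivial when $\cos\maxnorm{\mathcal K}\le 0$. I do not expect any genuine obstacle in this direction: all the content lies in correctly casting the two quantities into the common function $g$, and the only manipulations are the Rayleigh–Ritz identity and the phase-freedom reduction from modulus to real part. The hard direction is instead the reverse inequality $F_\text{min}(\mathcal K)\le\cos\maxnorm{\mathcal K}$, which cannot follow from weak duality alone and is precisely what the preceding Lemmas~\ref{lemma-not-state-exists}--\ref{lemma-Fmin-min-over-states} are designed to supply, by exhibiting the optimizing $\tilde{\mathbf{w}}$ that certifies the saddle point.
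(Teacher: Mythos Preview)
Your proposal is correct and follows essentially the same route as the paper's own proof: both rewrite $\cos\maxnorm{\mathcal K}$ and $F_\text{min}(\mathcal K)$ as max--min and min--max of the common function $\myRe\,\tr(\rho K_{\mathbf w})$ and then invoke weak duality. Your version is slightly more streamlined in that you minimize directly over density matrices via Rayleigh--Ritz (the paper instead passes to pure states using convexity of the numerical range) and you use the phase freedom of $\mathbf w$ to replace the modulus by the real part with equality (the paper instead uses the cruder bound $|x|\ge\myRe(x)$ after swapping the min and max), but the underlying idea is identical.
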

\begin{proof}
First note that Eq.~\eqref{eqn-TE-general-solution} can be written as
\begin{align}
\cos \maxnorm{\mathcal K}
&=
\max_{\mathbf v}
\min_{\ket{\psi}} 
\frac{1}{2} \bra{\psi} \left( K_{\mathbf v} + K_{\mathbf v}^\dag \right) \ket{\psi}
\nonumber
\\
&=
\max_{\mathbf v}
\min_{\ket{\psi}} 
\myRe \left( \bra{\psi} \sum_i v_i K_i \ket{\psi} \right).
\label{eqn-maxnorm-form-Re1}
\end{align}

On the other hand,
from Eq.~\eqref{eqn-Fmin-form2}, we have
\begin{align*}
F_\text{min}({\mathcal K}) 
&= 
\min_{\rho_B}
\max_{\bf w} \left\lvert \sum_i w_i \tr (\rho_B K_i) \right\rvert
\\
&\ge
\max_{\bf w}
\min_{\rho_B}
\left\lvert \sum_i w_i \tr (\rho_B K_i) \right\rvert
\\
&=
\max_{\bf w}
\min_{\ket{\psi}} 
\left\lvert \bra{\psi} \sum_i w_i K_i \ket{\psi} \right\rvert
\\
&\ge
\max_{\mathbf w}
\min_{\ket{\psi}} 
\myRe \left( \bra{\psi} \sum_i w_i K_i \ket{\psi} \right)
\end{align*}
where the first inequality in the second line is due to a general inequality known as the max-min inequality (see, e.g., Ref.~\cite{Boyd:2004}), the third line is due to the convexity of numerical ranges (see the proof of Lemma~\ref{lemma-exists-pure-state-for-Fmin}), and the fourth line is because $|x| \ge \myRe (x)$ for all $x$.
Comparing with Eq.~\eqref{eqn-maxnorm-form-Re1} proves the claim.
\end{proof}
\begin{theorem}
\label{thm-Fmin-equal-maxnorm}
{\rm
If $F_\text{min}({\mathcal K})>0$,
$$
F_\text{min}({\mathcal K})=
\cos \maxnorm{\mathcal K}.
$$
}
\end{theorem}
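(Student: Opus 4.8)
The plan is to establish the reverse inequality $F_\text{min}({\mathcal K}) \le \cos \maxnorm{\mathcal K}$ and then combine it with Theorem~\ref{thm-Fmin-ge-maxnorm}, which already supplies $F_\text{min}({\mathcal K}) \ge \cos \maxnorm{\mathcal K}$, to sandwich the two quantities together. The whole argument hinges on a single observation: the optimal weight vector $\tilde{\bf w}$ arising in the fidelity problem is itself an admissible candidate in the maximization defining the time-energy cost, so the max-min lower bound of Theorem~\ref{thm-Fmin-ge-maxnorm} is in fact attained.

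First I would invoke Lemma~\ref{lemma-Fmin-min-over-states}, which applies precisely because the hypothesis $F_\text{min}({\mathcal K})>0$ holds, to rewrite the minimum entanglement fidelity as a pure minimization over normalized pure states,
\begin{equation*}
F_\text{min}({\mathcal K}) = \min_{\ket{\psi}} \myRe \left( \bra{\psi} \sum_i \tilde{w}_i K_i \ket{\psi} \right).
\end{equation*}
Next I would check that $\tilde{\bf w}$, given explicitly by Eq.~\eqref{eqn-fidelity-solution-for-w}, is a unit vector in $\IC^d$: a direct computation of $\sum_i \lvert \tilde{w}_i \rvert^2$ shows the numerator and denominator cancel to give $1$. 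Hence $\tilde{\bf w}$ is a feasible point for the maximization over $\mathbf v$ in Eq.~\eqref{eqn-maxnorm-form-Re1}. Since $\cos \maxnorm{\mathcal K}$ is the maximum of $\min_{\ket{\psi}} \myRe\!\left( \bra{\psi} \sum_i v_i K_i \ket{\psi} \right)$ over all such $\mathbf v$, evaluating the inner minimum at the particular choice $\mathbf v = \tilde{\bf w}$ can only decrease the value, so
\begin{equation*}
\cos \maxnorm{\mathcal K} \ge \min_{\ket{\psi}} \myRe \left( \bra{\psi} \sum_i \tilde{w}_i K_i \ket{\psi} \right) = F_\text{min}({\mathcal K}).
\end{equation*}
Combining this with Theorem~\ref{thm-Fmin-ge-maxnorm} immediately forces equality.

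I do not expect a serious obstacle here, since the heavy lifting has already been done in the preceding lemmas: the non-existence result of Lemma~\ref{lemma-not-state-exists}, the numerical-range convexity argument of Lemma~\ref{lemma-exists-pure-state-for-Fmin}, and their synthesis into Lemma~\ref{lemma-Fmin-min-over-states}. The only genuine subtlety is conceptual rather than computational, namely recognizing why the gap in the max-min inequality of Theorem~\ref{thm-Fmin-ge-maxnorm} closes. It closes for two reasons working in tandem: the optimizer $\tilde{\bf w}$ of the outer problem is feasible for the inner maximization, and the condition $F_\text{min}({\mathcal K})>0$ (via Lemma~\ref{lemma-Fmin-min-over-states}) lets us replace the modulus $\lvert \cdot \rvert$ by its real part $\myRe(\cdot)$ without loss, so that the two expressions become literally the same functional evaluated at a common feasible point. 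This is also exactly why the case $F_\text{min}({\mathcal K})=0$ must be excluded and handled separately, as already flagged in the remark following Lemma~\ref{lemma-not-state-exists}.
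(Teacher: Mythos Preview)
Your proposal is correct and follows essentially the same route as the paper: invoke Lemma~\ref{lemma-Fmin-min-over-states} to write $F_\text{min}({\mathcal K})$ as $\min_{\ket{\psi}} \myRe\bigl(\bra{\psi}\sum_i \tilde w_i K_i\ket{\psi}\bigr)$, observe that $\tilde{\bf w}$ is a feasible unit vector in the outer maximization of Eq.~\eqref{eqn-maxnorm-form-Re1} so that $F_\text{min}({\mathcal K})\le\cos\maxnorm{\mathcal K}$, and then sandwich with Theorem~\ref{thm-Fmin-ge-maxnorm}. Your explicit verification that $\tilde{\bf w}$ has unit norm is a detail the paper leaves implicit, but otherwise the arguments coincide.
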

\begin{proof}
Note that Eq.~\eqref{eqn-Fmin-form-min-only} is less than or equal to Eq.~\eqref{eqn-maxnorm-form-Re1}, giving
$$
F_\text{min}({\mathcal K})
\le
\cos \maxnorm{\mathcal K}.
$$
This together with Theorem~\ref{thm-Fmin-ge-maxnorm} gives the result.
\end{proof}

\begin{proof}
[{\bf Proof of Theorem~\ref{thm-main-theorem}}]
Theorem~\ref{thm-main-theorem} follows from Theorems~\ref{thm-Fmin-ge-maxnorm} and \ref{thm-Fmin-equal-maxnorm}.
\end{proof}

\section{Example}

The quantum depolarizing channel acting on 
$n \times n$
density matrices is defined as
\begin{align*}
{\mathcal K}_\text{D} (\rho) & \equiv q \rho + (1-q) \frac{I}{n}
\end{align*}
where 
complete positivity requires that
$-1/(n^2-1) \leq q \leq 1$~\cite{King2003}.
The minimum entanglement fidelity can be achieved
with the
input state $\ket{\Psi}_{AB}=\sum_{i=0}^{n-1} \ket{ii}_{AB} /\sqrt{n}$.
The output state is
$$
(I \otimes {{\mathcal K}_\text{D}})(\ket{\Psi}\bra{\Psi})
=
\sum_{i,j=0}^{n-1} \frac{q}{n} \ket{ii}_{AB}\bra{jj} + \frac{1-q}{n^2} \ket{i}_A\bra{j} \otimes I_B
$$
and the minimum entanglement fidelity 
calculated using Eq.~\eqref{eqn-def-fidelity-channel} 
is
$$
F_\text{min}({\mathcal K}_\text{D})=\sqrt{q+\frac{1-q}{n^2}}.
$$
The time-energy cost has been proved in a previous work (see Eq.~(59) of Ref.~\cite{Fung:2013:Time-energy}) to be 
$$
\maxnorm{{\mathcal K}_\text{D}}=\cos^{-1}\sqrt{q+\frac{1-q}{n^2}},
$$
which can be checked to be consistent with Theorem~\ref{thm-main-theorem} for the entire range of $q$.
On the other hand, given the non-zero minimum entanglement fidelity of a channel, we can easily infer its time-energy cost using Theorem~\ref{thm-main-theorem}.

We can also compute the fidelity without entanglement using channel input $\ket{0}\bra{0}$ which produces the output $q\ket{0}\bra{0}+(1-q)I/n$.
The fidelity is thus $\sqrt{q+(1-q)/n}$.
Hence, the fidelity without entanglement does not correspond to the time-energy cost of the channel in general.

\section{Connection with time-energy uncertainty relation}

\label{sec-TEUR}

We begin by considering closed systems.
The channel $\mathcal K$ for a closed system is a unitary transformation $U$ of dimensions $n \times n$ with eigenvalues
$\exp(i \theta_j), j=1,\dots,n$.
Suppose that $-\pi/2 \le \theta_j\le \pi/2$ for all $j$.
Then, in this case, 
Eq.~\eqref{eqn-TE-general-solution} of
Theorem~\ref{thm-TE-general-solution} 
simplifies to
\begin{align*}
\maxnorm{\mathcal{K}} &=
\cos^{-1}
\left[
\max_{\gamma}
\frac{1}{2} \lambda_\text{min} \left( e^{i \gamma} U + e^{-i \gamma}U^\dag \right)
\right]
\\
&=
\frac{\theta_\text{max}-\theta_\text{min}}{2}
\hspace{2.3cm}\text{(for $|\theta_j| \le \pi/2$)}
\end{align*}
where 
$\theta_\text{max}=\max_j \theta_j=\maxnorm{U}$ and $\theta_\text{min}=\min_j \theta_j$.
This can be easily shown by noting that $\gamma$ is chosen so that the two left-most eigenvalues of $e^{i \gamma} U$ (one above and one below the real line) are a complex conjugate of each other.

Since $U=\exp(-i H t/\hbar)$ for some Hamiltonian $H$ with eigen-energies $E_j, j=1,\dots,n$,
we have 
\begin{equation}
\label{eqn-closed-system-TE-product-1}
\maxnorm{\mathcal{K}} = \frac{(E_\text{max}-E_\text{min}) t}{2 \hbar}
\end{equation}
where $E_\text{max}$ and $E_\text{min}$ are the maximum and minimum eigen-energies. And according to Theorem~\ref{thm-main-theorem}, there is an input state having an entanglement fidelity $F$ with the corresponding channel output state given by 
$
F=\cos \maxnorm{\mathcal K}
$.
Furthermore, all other input states have an entanglement fidelity no smaller than this.
We may consider evolving the system with this Hamiltonian.
As the system evolves, $t$ increases from zero and $F$ decreases from one.
Thus, the fastest state for this Hamiltonian that achieves an entanglement fidelity $F$ takes time
\begin{equation}
\label{eqn-closed-system-my-TEUR-1}
t=
\frac{2 \hbar \cos^{-1}(F)}{E_\text{max}-E_\text{min}}.
\end{equation}
In particular, the minimum orthogonalization time is
\begin{equation}
\label{eqn-closed-system-my-TEUR-2}
t_\text{orth}=
\frac{\pi \hbar}{E_\text{max}-E_\text{min}}.
\end{equation}
This means that no state can be orthogonalized faster than this time $t_\text{orth}$.
Equations~\eqref{eqn-closed-system-my-TEUR-1} and \eqref{eqn-closed-system-my-TEUR-2} may be regarded as TEURs for the fastest states for a Hamiltonian implementing a unitary channel.

The TEUR for the fastest state in Eq.~\eqref{eqn-closed-system-my-TEUR-2} may be used as a reference for the orthogonalization time of a given input state computed using a standard TEUR.
For example, 
Chau~\cite{Chau2010} proposed a TEUR that gives the orthogonalization time for the state 
$\frac{1}{\sqrt{2}}( \ket{-\mathcal E}+\ket{\mathcal E})$
where $\ket{\pm \mathcal E}$ are the eigen-states of the Hamiltonian with corresponding eigen-energies $\pm \mathcal E$.  This time is computed to be $\hbar/(A \mathcal E)$ where $A \approx 0.724611$.
On the other hand, Eq.~\eqref{eqn-closed-system-my-TEUR-2} gives $\pi \hbar/(2 \mathcal E)$ which is larger.
This means that 
the TEUR in Ref.~\cite{Chau2010} is not tight for that particular state.

We may extend this concept to general quantum channels.
Given a channel $\mathcal K$, by definition its time-energy value $\maxnorm{\mathcal K}$ is the smallest of the time-energy values of all unitary extensions.
Thus, Eq.~\eqref{eqn-closed-system-TE-product-1} also holds for general quantum channel $\mathcal K$ with $E_\text{max}$ and $E_\text{min}$ being the maximum and minimum energies of the Hamiltonian corresponding to the best unitary extension, and $t$ being the evolution time of the Hamiltonian to result in the channel $\mathcal K$,
provided that 
the eigen-angles $\theta_j$'s of the unitary extension satisfy
$|\theta_j|\le \pi/2$.
Similarly, Eqs.~\eqref{eqn-closed-system-my-TEUR-1} and \eqref{eqn-closed-system-my-TEUR-2} applies to this unitary extension.
For other suboptimal unitary extensions satisfying $|\theta_j|\le \pi/2$,
we have
\begin{equation}
\maxnorm{\mathcal{K}} \le \frac{(E_\text{max}-E_\text{min}) t}{2 \hbar}
\end{equation}
where $E_\text{max}$ and $E_\text{min}$ are the energies of the corresponding Hamiltonian.

\section{Conclusions}

We established an exact relation between the physical aspect of any quantum process and its informational aspect.
The time-energy cost of a quantum channel has an interpretation of being the amount of physical resources incurred for performing the action of the channel.
Intuitively the larger this amount, the more action is done on its input state, and our result in Theorem~\ref{thm-main-theorem} confirms this intuition strongly since our relation in the theorem is exact.
Our relation may also be regarded as a 
TEUR
for the fastest state that achieves a certain fidelity.
We believe that our exact relation sheds new light on the understanding of the limit on information processing from a 
quantum dynamical perspective.

\section*{Acknowledgments}%
This work is supported in part by
RGC under Grant No. 700712P of the HKSAR Government.

\bibliographystyle{apsrev4-1}

\bibliography{paperdb}

\end{document}